\documentclass[conference,a4paper]{IEEEtran}
\usepackage{stfloats}
\usepackage{cite}
\usepackage{amsmath,amssymb,amsfonts}
\usepackage{amsthm}
\usepackage{algorithmic}
\usepackage{hyperref}
\usepackage{graphicx}
\usepackage{textcomp}
\usepackage{xcolor}
\usepackage{float}
\usepackage{subfigure}
\usepackage{amsthm}
\usepackage{bm}
\usepackage{multicol}

\usepackage{fancyhdr}
 \usepackage{lastpage}
\usepackage{bm}

\newcommand{\boldPhi}{\boldsymbol{\Phi}}

\newcommand{\tr}{\text{trace}}
\newcommand{\Find}{\text{Find}}

\usepackage{cite}
\usepackage{amsmath,amssymb,amsfonts}
\usepackage{algorithmic}
\usepackage{graphicx}
\usepackage{textcomp}
\usepackage{xcolor}
\usepackage{float}
\usepackage{subfigure}

\usepackage{bm}
\usepackage{multicol}
\usepackage{graphicx}  

\usepackage{caption}
\usepackage{environ}         
\usepackage{etoolbox}        

 \usepackage{epsfig,endnotes}

\usepackage{floatpag,enumitem}
\usepackage{relsize}
 \usepackage{tikz}

\usepackage{multirow}
\usepackage{setspace}
\usepackage{mathrsfs}
\usepackage{bbm}
\usepackage{pifont}
\usepackage{amsfonts}

\usepackage{amsmath, amsthm}

\usepackage{tabularx}
\usepackage{listings}
\usepackage{graphicx}
\usepackage{amssymb,booktabs}
\usepackage{array}
\usepackage{cases}

 \usepackage{supertabular}
\usepackage{mathrsfs}

\usepackage{psfrag}
\usepackage{bbding}

\usepackage{float}

\usepackage{amsbsy}

\makeatletter
\providecommand{\leftsquigarrow}{%
  \mathrel{\mathpalette\reflect@squig\relax}%
}
\newcommand{\reflect@squig}[2]{%
  \reflectbox{$\m@th#1\rightsquigarrow$}%
}
\makeatother



\newenvironment{proof}{{Proof:}}{\hfill$\square$}

\usepackage{algorithm}
 \usepackage{algorithmic}

\makeatletter
\newcommand{\newalgname}[1]{%
  \renewcommand{\ALG@name}{#1}%
}



\def\centerhack#1{\hbox to 0pt{\hss\footnotesize #1\hss}}
\def\centerhackn#1{\hbox to 0pt{\hss #1\hss}}
\def\dchack#1{\vbox to 0pt{\vss{\hbox to 0pt{\hss#1\hss}}\vss}}


\usepackage{etoolbox}
\setcounter{secnumdepth}{5}
\AtBeginEnvironment{align}{\setcounter{subeqn}{0}}
\newcounter{subeqn} %

\usetikzlibrary{positioning}

\newcounter{mysub}

\setcounter{mysub}{0}

\newtheorem{prop}{Proposition}
\hyphenation{op-tical net-works semi-conduc-tor}

\usepackage{slashbox}

\setlength{\columnsep}{0.13 in}
\def\BibTeX{{\rm B\kern-.05em{\sc i\kern-.025em b}\kern-.08em
    T\kern-.1667em\lower.7ex\hbox{E}\kern-.125emX}}
\begin{document}

\title{Intelligent Reflecting Surface Aided Network: Power Control for Physical-Layer Broadcasting\\
}

\author{Huimei Han$^{1}$, Jun Zhao$^2$, Dusit Niyato$^2$, Marco Di Renzo$^3$, Quoc-Viet Pham$^4$
\vspace{1.5mm}
\\
\fontsize{10}{10}\selectfont\itshape
$^1$College of Information Engineering, Zhejiang University of Technology,
Hangzhou, Zhejiang Province, China\\
$^2$School of Computer Science and Engineering, Nanyang Technological University, Singapore\\
$^3$Laboratory of Signals and Systems of Paris-Saclay University - CNRS, CentraleSup\'{e}lec, Univ Paris Sud, France\\
$^4$High Safety Core Technology Research Center, Inje University, South Korea\vspace{1.5mm}
\\
\fontsize{9}{9}
$^1$hmhan1215@zjut.edu.cn, $^2$\{junzhao,\,dniyato\}@ntu.edu.sg,   $^3$marco.di.renzo@gmail.com, $^4$vietpq90@gmail.com\vspace{-2mm}
}


\maketitle
\thispagestyle{fancy}
\pagestyle{fancy}
\lhead{This paper appears in the Proceedings of IEEE International Conference on Communications (ICC) 2020.\\
Please see \url{https://arxiv.org/abs/1912.03468} for the full journal version. Feel free to contact us for questions/remarks.}
\cfoot{\thepage}
\renewcommand{\headrulewidth}{0.4pt}
\renewcommand{\footrulewidth}{0pt}

\begin{abstract}
As a recently proposed idea for future wireless systems, intelligent reflecting surface (IRS) can assist   communications between entities which do not have high-quality direct channels in between. Specifically, an IRS comprises many \mbox{low-cost} passive elements, each of which reflects the incident signal by incurring a phase change so that the reflected signals  add coherently at the receiver. In this paper,  for an  IRS-aided wireless network, we study the problem of power control at the base station (BS) for physical-layer
broadcasting under quality of service (QoS) constraints at mobile users, by jointly designing the transmit beamforming  at  the BS and the phase shifts of the IRS units. Furthermore, we derive  a lower bound of the minimum transmit power at the BS to present the performance bound for optimization methods. Simulation results  show that, the transmit power at the BS approaches   the lower bound with  the increase of the number of IRS units, and is much lower  than that of the communication system without IRS.
\end{abstract}

\begin{IEEEkeywords}
Intelligent reflecting surface, wireless communication,  power control, quality of service.
\end{IEEEkeywords}

\section{Introduction}
Benefiting from  various  advanced technologies, the fifth-generation (5G) communications achieves great improvements in  spectral efficiency, such as  massive antennas deployment at  the base station (BS) (i.e., massive multiple-input multiple-output),  non-orthogonal multiple access, millimeter wave  communications, and ultra-dense hetnets. However,  these advanced technologies will introduce great mounts of energy consumption, resulting in  the high complexity of hardware implementation, which brings great challenges for  practical implementations~\cite{b1}. For example, the ultra-dense HetNets mean that there are lots of BSs in the network, and the energy consumption  scales up  with respect to the number of BSs.  The massive antenna arrays consist of active elements  and transmit/recieve data, thus consuming energy  expensively.

\textbf{Intelligent reflecting surface.}
To improve the spectral efficiency  and reduce the energy consumption, researchers are exploring new ideas for   future wireless systems\mbox{~\cite{6gexplore1,6gexplore2,di2019smart,basar2017index}.} Among theses ideas, intelligent reflecting surface (IRS) has been considered in several studies~\cite{IRSwu2,b9,IRSindus,Multi,Los}. An IRS is   a planar array  consisting of  many reflecting and nearly passive   units. Each IRS unit is controlled by the BS remotely to change the phase of the incident signal, so that the signals at the receiver can add coherently. In other words, IRS  intelligently adjusts the propagation conditions to improve communication quality between the BS and mobile users (MUs). Since each IRS unit only reflects signals in a passive way, instead of transmitting/receiving signals in an active way,  the energy consumption is very low. In addition,  due to the characteristics of  lightweight and low profile, the IRS can be deployed  on walls/building facades, and the channel model between the BS and the IRS is  usually characterized as line of sight (LoS)~\cite{IRSwu2}.

\textbf{Our contributions.}
To address  the problem of power control  at the BS for physical layer
broadcasting under quality of service (QoS) constraints at the MUs in an IRS-aided network, we propose to  employ  the alternating optimization algorithm to  jointly design the transmit beamforming  at  the BS and the IRS units. Furthermore,  we derive  the lower bound of the minimum transmit power for the broadcast setting to present the performance bound for optimization methods.
 Simulation results  show that, for  the broadcasting transmit pattern,  the transmit power at the BS approaches   the lower bound with the increase of the number of IRS units, and   is much lower  than that of the communication system without IRS.

 \textbf{Comparing this paper with~\cite{IRSwu2,b9}.}
 Recently,  Wu and Zhang~\cite{IRSwu2,b9}  also considered downlink power control under QoS, with phase shifts of IRS units having continuous domains in~\cite{IRSwu2} and discrete domains in~\cite{b9}. The differences between our paper and~\cite{IRSwu2,b9} are twofold. First, our paper considers the  broadcast setting, while~\cite{IRSwu2,b9} are for the unicast setting. Second, under the line of sight (LoS) channel model between the BS and IRS, we analyze a lower bound of  the minimum transmit power in the general setting (i.e., an arbitrary number of antennas at the BS, an arbitrary number of IRS units, and an arbitrary number of MUs), while~\cite{IRSwu2,b9} only derived the relationship between the transmit power and the  received power in the special setting of considering single user and single-antenna BS, ignoring the channel between BS and MU, and assuming that the channel model between the BS and IRS is Rayleigh fading.

\textbf{Other related work.} For IRS-aided  wireless communications, in addition to~\cite{IRSwu2,b9} above and \cite{IRSwu2}'s conference version~\cite{IRSwu1} for downlink power control under QoS, optimizing transmit power at BS is  addressed in~\cite{Los} to maximize the minimum
 SINR among users and in~\cite{guo2019weighted,Panbroadcast}~to maximize the weighted sum of downlink
rates. An earlier draft~\cite{zhaooveriew} of our current paper summarizes problems of downlink power control under QoS in the unicast, multicast, and broadcast settings. As an updated version, this current paper adds simulation results and also derives a lower bound for  the minimum transmit power at the BS. In the absence of IRS, downlink power control under QoS for the broadcast setting  is studied in the seminal work~\cite{sidiropoulos2006transmit}.


\textbf{Organization.}
The remainder of this paper is organized as follows. Section~\ref{sec:systemmodel}  presents the system model and formulates the problem of power control under QoS. In Section~\ref{sec:Optimization}, we describe an algorithm to solve the problem. A lower bound for the minimum transmit power is  elaborated in Section~\ref{sec:analysis}. Sections~\ref{sec:simulation} and~\ref{sec:conclution} give numerical results and the conclusion respectively.

\textbf{Notation.}
We utilize italic letters, boldface
lower-case and upper-case letters to denote the scalars, vectors and matrices respectively.
$(\cdot)^T$ and $(\cdot)^H$ stand for the transpose and conjugate transpose of a matrix, respectively.  We utilize $\boldsymbol{D}_{i,j}$ and  $\boldsymbol{x}_{i}$ to stand for the element in the $i^{th}$ row and $j^{th}$  column of $\boldsymbol{D}$ and the $i^{th}$ element of $\boldsymbol{x}$ respectively.  $\mathbb{C}$ denotes the set of all complex numbers.  $\bm{{I}}$ is the identity matrix.   $\mathcal{C}\mathcal{N}( \mu  , \sigma^2 )$  denotes a circularly-symmetric complex Gaussian distribution with mean $\mu$  and variance $\sigma^2$.    Let $\| \cdot\|$ and $| \cdot |$ denote the Euclidean norm of a vector and cardinality of a set respectively. $\text{diag}(\boldsymbol{x})$ means a diagonal matrix with the element in the $i^{th}$ row and $i^{th}$  column being the $i^{th}$ element in $\boldsymbol{x}$. \text{arg}($\boldsymbol{x}$)  stands for the phase vector. $\mathbb{E}(\cdot)$ and $\textup{Var}(\cdot)$ are the expectation and variance operations, respectively. For a square  matrix $\boldsymbol{M}$, we use $\boldsymbol{M}^{-1}$ and  $\boldsymbol{M}\succeq 0$ to denote its inverse  and positive semi-definiteness. respectively.

\section{System model and  problem definition}\label{sec:systemmodel}

\subsection{System model}
\begin{figure}[!t]
  \centering
 \includegraphics[scale=0.35]{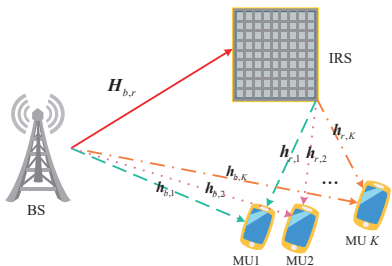}\\
  \caption{An IRS-aided communication system consisting of a base station (BS), multiple mobile users (MUs), and an IRS comprising many IRS units, where phase shifts incurred by the IRS units are remotely controlled by the BS.\vspace{-10pt}}\label{systemmodel}
\end{figure}

We consider   IRS-aided  communications in the broadcast setting, where there are a  BS with $M$ antennas and an IRS with $N$ IRS units, and  $K$ single-antenna MUs, as shown in Fig.~\ref{systemmodel}. We consider that the BS utilizes linear  transmit precoding as the beamforming vector, denoted by $\bm{w}\in \mathbb{C}^{M\times1}$, and  thus the transmitted signal at the BS is $\boldsymbol{x}= \bm{w}{s}$ where ${s}$ is  the broadcasted data. When BS broadcasts the signal $\boldsymbol{x}$, it will arrive at each MU via  indirect and direct  channels, and the received signal at each MU is the superposed signal from the two channels. More specifically, for the indirect channel, the transmitted signal $\boldsymbol{x}$  travels from the BS  to the IRS,  reflected by the IRS, and  finally travels from the IRS to these $K$ MUs. For the direct channel, the  transmitted signal $\boldsymbol{x}$  travels from  BS to these  $K$ MUs directly.

Let $\boldPhi= \text{diag}(\beta_1 e^{j \theta_1}, \ldots, \beta_N e^{j \theta_N})$ denote  the reflection coefficient matrix at the IRS, where  $\beta_n$ and $\theta_n$ denote the amplitude factor and  phase shift respectively. In this  paper, we assume that the IRS only changes  the phase of the reflected signal, i.e., $\theta_n \in [0,2\pi)$ and $\beta_n=1 $. Let $\bm{H}_{\text{b},\text{r}} \in \mathbb{C}^{N \times M} $, $\boldsymbol{h}_{\text{r},i}^H \in \mathbb{C}^{1 \times N} $,  and $\boldsymbol{h}_{\text{b},i}^H\in \mathbb{C}^{1 \times M} $ be the BS-IRS channel, IRS-$i^{th}$ MU channel, and BS-$i^{th}$ MU channel respectively. Then, the received signal at MU $i$ is given by
\begin{equation}\label{receive-data}
{y_i} = (\boldsymbol{h}_{\text{r},i}^H\boldPhi {\bm{H}_{\text{b},\text{r}}} + \boldsymbol{h}_{\text{b},i}^H)\bm{w}{s}  + {n_i}, \ \ i = 1,\ldots,K,
\end{equation}
where  ${n_i}\sim \mathcal{C}\mathcal{N}(0, \sigma^2_i) $ denotes the  additive white  Gaussian noise at MU $i$.

We assume that the broadcasted  data ${s}$ is  normalized to unit power. Then, the  signal-to-interference-plus-noise ratio (SINR) at MU $i$ can be written as
\begin{equation}\label{SINR}
\text{SINR}_i   = \frac{|\bm{h}^H_{i}(\boldPhi)\bm{w}|^2}{ \sigma^2_i},
\end{equation}
where $\boldsymbol{h}_i^H(\boldPhi)= \bm{h}^H_{\text{r},i}\boldPhi \bm{H}_{\text{b},\text{r}}+\bm{h}^H_{\text{b},i}$ means the overall downlink channel from the BS to MU $i$.
\subsection{Problem definition}
The problem of power control under  QoS for  broadcasting, is  to minimize  the  transmitted power at the BS under QoS. Note that the transmitted power at the BS is $\|\bm{w}\|^2$, and that the QoS of MU $i$ is usually characterized by its SINR.
Then, this problem can be formulated as
\begin{subequations} \label{eq-Prob-P1}
\begin{alignat}{1}
 \text{(P1):~}
\min_{\bm{w}, \boldPhi}~ &\|\bm{w}\|^2 \label{OptProb-power-broadcast-obj}   \\
~\mathrm{s.t.}~&\frac{|\bm{h}^H_{i}(\boldPhi)\bm{w}  |^2}{\sigma^2_i}\geq \gamma_i, ~\forall i = =1,\ldots,K, \label{OptProb-power-broadcast-SINR-constraint}\\
&0\le {\theta _n} < 2\pi , \ n=1,\ldots,N, \label{OptProb-power-broadcast-Phi-constraint}
\end{alignat}
\end{subequations}
 where $\gamma_i$  is the  SINR target.  Furthermore, without loss of generality, we  assume that all  MUs  have the same SINR target and    the same  noise variance, {{i.e.}, $\gamma_k=\gamma, \sigma_k^2=\sigma^2$}.

 \section{Alternating optimization algorithm } \label{sec:Optimization}


 In this section, we utilize   alternating optimization, which is  used for multivariate optimization in an alternating manner, to solve  Problem $\text{(P1)}$, as described in Algorithm \ref{Alg-Problem-P1}. More specifically, we first optimize $\bm{w}$ given $\boldPhi$, and then optimize $\boldPhi$ given $\bm{w}$, which is performed iteratively to obtain the desired $\bm{w}$ and $\boldPhi$.  In the following, we describe the details of the $j^{th}$ iteration to illustrate the alternating optimization algorithm.

\textbf{Optimizing $\bm{w}$ given $\boldPhi^{(j-1)}$.}
Given $\boldPhi^{(j-1)}$ obtained during the $(j-1)^{th}$ iteration,  Problem~(P1) becomes the conventional power control problem under QoS in the downlink broadcast channel without an IRS.
\begin{subequations} \label{eq-Prob-P1a}
\begin{alignat}{2}
 \text{(P2):~}
\min_{\bm{w}}~ &\|\bm{w}\|^2 \label{OptProb-power-broadcast-obj}   \\
~\mathrm{s.t.}~&\frac{|\bm{h}^H_{i}(\boldPhi^{(j-1)})\bm{w}  |^2}{\sigma^2}\geq \gamma, ~i =1,\ldots,K, \label{OptProb-power-broadcast-SINR-constraint}
\end{alignat}
\end{subequations}

Note that Problem (P2) is \mbox{non-convex} because of the \mbox{non-convex} constraint, which can be solved by a relaxation of Problem~(P2) based on semi-definite programming (SDP)~\cite{SDP},

\begin{subequations}
\begin{alignat}{1}
 \text{(P3):~}
\min_{\bm{X}} ~~~ &\tr(\bm{X})  \\
~\mathrm{s.t.}~&\resizebox{0.65\hsize}{!}{$\tr(\bm{X} \bm{H}_{i}(\boldPhi^{j-1}))\geq \gamma\sigma^2, ~ i =1,\ldots,K$}, \label{OptProb-power-broadcast-SINR-constraint}\\
& \boldsymbol{\bm{X}} \succeq 0,
\end{alignat}
\end{subequations}
where $\bm{X}$ and $\bm{H}_{i}(\boldPhi^{(j-1)})$ are defined as $\bm{X}:=\bm{w}\bm{w}^H$ and $\bm{H}_{i}(\boldPhi^{(j-1)}):=\bm{h}_{i}(\boldPhi^{(j-1)})\bm{h}_{i}(\boldPhi^{(j-1)})^H$ respectively.

Apparently, Problem (P3) is an  SDP, and  we can utilize the convex optimization solvers (e.g., CVX~\cite{CVX}) to solve this problem. After $\bm{X}$ is available, the Gaussian randomization~\cite{Gaussianrandom} is applied to obtain solution to Problem (P2). Note that, when utilizing the  Gaussian randomization, we can obtain many candidate solutions to  Problem (P2), and we select the one with the minimum power as the value of $\bm{w}$ during the $j^{th}$ iteration,  denoted by $\bm{w}^{(j)}$.

\textbf{Finding $\boldPhi$ given $\bm{w}^{(j)}$.} Given $\bm{w}^{(j)}$, Problem~(P1) becomes the following feasibility check problem of finding $\boldPhi$:
\begin{algorithm}[!h]
\caption{Alternating optimization to find $\bm{w}$ and $\boldsymbol{\Phi}$ for Problem~(P1).} \label{Alg-Problem-P1}
\begin{algorithmic}[1]
\setlength{\belowcaptionskip}{-0.7cm}
\setlength{\abovecaptionskip}{-0.1cm}
\STATE Initialize  $\boldsymbol{\Phi}$ as $\boldsymbol{\Phi}^{(0)}:= \text{diag}( e^{j \theta_1^{(0)}}, \ldots,  e^{j \theta_N^{(0)}})$, where $\theta_n^{(0)}$  $(n =1,2,\ldots, N)$ is chosen uniformly at random from $[0, 2\pi)$;
\STATE Initialize the iteration number $j \leftarrow 1$;
\WHILE{1}
\item[] \COMMENT{\textit{Comment: Optimizing $\bm{w}$ given $\boldPhi$:}}
\STATE {Given \hspace{-1pt}$\boldPhi$ as \hspace{-1pt}$\boldsymbol{\Phi}^{(j-1)}$, solve Problem (P3) to obtain \hspace{-1pt}$\bm{w}^{(j)}$\hspace{-1pt};\hspace{-1pt}}
\STATE \mbox{Compute the object function value $P_t^{(j)} \leftarrow \|\bm{w}^{(j)}\|^2$};
\IF{$1-\frac{P_t^{(j)}}{P_t^{(j-1)}}\le \varepsilon $}
\STATE \textbf{break}; 
\COMMENT{\textit{Comment: $\varepsilon$  controls the number of executed iterations before termination. The algorithm terminates if the relative difference between the transmit power obtained during the $j^{th}$ iteration and the $(j-1)^{th}$ iteration is no greater than $\varepsilon$.}} \label{Alg-Problem-P1-break1}
\ENDIF
\item[] \COMMENT{\textit{Comment: Finding $\boldPhi$ given $\bm{w}$:}}
\STATE Given $\bm{w}$ as $\bm{w}^{(j)}$, solve Problem~(P6) to obtain $\boldPhi^{(j)}$;
\IF{Problem~(P6) is infeasible}
\STATE \textbf{break};  \label{Alg-Problem-P1-break2}
\ENDIF
\ENDWHILE
\end{algorithmic}
\end{algorithm}

\begin{subequations} \label{eq-Prob-P1c}
\begin{alignat}{2}
\text{(P4)}: \Find~~~& \boldPhi \\
\mathrm{s.t.}~~~~~&   \frac{|\bm{h}^H_{i}(\boldPhi)\bm{w}^{(j)} |^2}{\sigma^2}\geq \gamma, \label{eq:prob-P2-SINR-constraint}\\
 &0\le {\theta _n} < 2\pi , \ n=1,...,N. \label{eq:prob-P1a-rank-constraint}
\end{alignat}
\end{subequations}
Let  $\bm{\phi}= [ e^{j \theta_1}, \ldots,  e^{j \theta_N}]^H$, $\bm{a}_{i}=\text{diag}(\bm{h}^H_{\text{r},i})\bm{H}_{\text{b},\text{r}}\bm{w}^{(j)}$, and $b_{i}=\bm{h}^H_{\text{b},i}\bm{w}^{(j)}.$  Then, the (P4) can be rewritten as
\begin{subequations} \label{eq-Prob-P1d}
\begin{alignat}{2}
\text{(P5)}: \Find~~~& \bm{\phi} \\
\mathrm{s.t.}~~~~~&    \begin{bmatrix}
\bm{\phi}^H ,~
1
\end{bmatrix} \bm{A}_i \begin{bmatrix}
\bm{\phi} \\
1
\end{bmatrix} + {b}_{i}{b}_{i}^H  \geq \gamma\sigma^2. \label{eq:prob-quadratic-constraint}  \\
 &|{\phi}_n|=1 , \ n=1,\ldots,N. \label{eq:prob-P1d-rank-constraint}
\end{alignat}
\end{subequations}
where $ \bm{A}_i= \begin{bmatrix}
\bm{a}_{i}\bm{a}_{i}^H, & \bm{a}_{i}{b}_{i}^H \\
{b}_{i}\bm{a}_{i}^H, & 0
\end{bmatrix}.$

Note that, since the constraints (\ref{eq:prob-P1d-rank-constraint}) are \mbox{non-convex},
Problem (P5) is a \mbox{non-convex} optimization problem. However,  by introducing an auxiliary variable
$t$ satisfying $|t|=1$, Problem (P5) can be converted to a  homogeneous quadratically constrained quadratic program (QCQP). Specifically, define
\begin{align}
 & \bm{v}:= t  \begin{bmatrix}
\bm{\phi}  \\
1
\end{bmatrix}=  \begin{bmatrix}
\bm{\phi}t \\
t
\end{bmatrix},  \text{and}\ \  \boldsymbol{V}:= \boldsymbol{v}  \boldsymbol{v}^H.
\end{align}

 Then, a relaxation of Problem~(P5) based on SDP  is
\begin{subequations} \label{eq-Prob-P1e}
\begin{alignat}{2}
\text{(P6)}:  \max_{\boldsymbol{V},\boldsymbol{\alpha}} \sum\nolimits_{i=1}^K   &\alpha_i\\
\mathrm{s.t.}~~~~~&   \tr(\bm{A}_i\boldsymbol{V} )  + {b}_{i} {b}_{i}^H \geq \alpha_i + \gamma \sigma^2, \label{eq:prob-P1e-prime-SINR-constraint}\\
 &\boldsymbol{V}_{n,n}  = 1,  \ \ n=1,\ldots,N+1\\
& \boldsymbol{V} \succeq 0,
 \ \alpha_i \geq 0,i=1,\ldots,K,
\end{alignat}
\end{subequations}
where  variable $\alpha_i$ can be described as MU $i$'s ``SINR residual'' in the phase shift optimization~\cite{IRSwu2}.

Similar to Problem (P2),  we can utilize the convex optimization solvers  to solve problem (P6). After $\bm{V}$ is available, the Gaussian randomization is applied to obtain many candidate solutions to  (P4), denoted by $[\boldPhi^{(j)}_1,\ldots, \boldPhi^{(j)}_c]$ where $c$ is  the number of candidate solutions. The rule of selecting one as the value of $\boldPhi$ during the $j^{th}$ iteration, denoted by  $\boldPhi^{{(j)}}$, is described as follows.

First, we define
\begin{equation} \label{eq-value-f}
f:={\min ({  {| \bm{h}_1^{H}(\boldPhi) \overline{\bm{w}} |^2},\ldots, {| \bm{h}_K^{H}(\boldPhi) \overline{\bm{w}}|^2} })},
\end{equation}
 where $\overline{\bm{w}}=\frac{\bm{w}}{\|\bm{w}\|}$ denotes the transmit beamforming direction  at the BS.
Replacing $\boldPhi$ and $\overline{\bm{w}}$  in Eq.~(\ref{eq-value-f}) with $\boldPhi^{(j-1)}$  and $\overline{\bm{w}}^{(j)}=\frac{\bm{w}^{(j)}}{\|\bm{w}^{(j)}\|}$ respectively, we can obtain the value of $f$ after optimizing $\bm{w}$ given $\boldPhi^{(j-1)}$, denoted by $f^{(j)}_{\text{o}\bm{w}}$ (the subscript ``\text{o}'' represents optimization).

 Next, after replacing $\boldPhi$ and $\bm{w}$ in Eq.~(\ref{eq-value-f}) with  $\boldPhi^{(j)}_k$ ($k=1,\ldots,c$) and $\bm{w}^{(j)}$ respectively, we can obtain the value of $f$ corresponding to $\boldPhi^{(j)}_k$, denoted by $f^{(j)}_k$. If $f^{(j)}_k$ satisfies  $f^{(j)}_k \ge f^{(j)}_{\text{o}\bm{w}}$, we incorporate it into a set $G$, and select the $\boldPhi$ corresponding to the maximum element in $G$ as  $\boldPhi^{(j)}$. We denote the  maximum element in $G$ as $f^{(j)}_{\text{o}\boldPhi}$, which is the value of $f$ after optimizing  $\boldPhi$ given $\bm{w}^{(j)}$.

\begin{prop}
The rule of selecting one as the value of $\boldPhi$ during the $j^{th}$ iteration, ensures the objective value in Problem (P2) is \mbox{non-increasing} over the iterations.
\end{prop}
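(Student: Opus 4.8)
The plan is to follow a single scalar along the iterations, namely the worst effective channel gain $f$ of~\eqref{eq-value-f} evaluated at the current beamforming direction, and to show that the phase-shift update never lets it fall below the value it had right after the beamformer update, while the beamformer update turns a larger $f$ into a smaller power. The elementary fact I would start from is this: for a \emph{fixed} reflection matrix $\boldPhi$ and a \emph{fixed} unit-norm direction $\overline{\bm{w}}$, the least transmit power meeting all the SINR constraints of~\eqref{eq-Prob-P1a} equals $\gamma\sigma^2/f$ with $f=\min_{i}|\bm{h}_i^H(\boldPhi)\overline{\bm{w}}|^2$, because writing $\bm{w}=\|\bm{w}\|\,\overline{\bm{w}}$ turns constraint $i$ into $\|\bm{w}\|^2|\bm{h}_i^H(\boldPhi)\overline{\bm{w}}|^2\ge\gamma\sigma^2$, and all of these hold simultaneously iff $\|\bm{w}\|^2\ge\gamma\sigma^2/f$. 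Applying this to $\boldPhi^{(j-1)}$ and to the feasible beamformer $\bm{w}^{(j)}$ of power $P_t^{(j)}=\|\bm{w}^{(j)}\|^2$ returned by the $\bm{w}$-step of iteration $j$, and dividing the worst SINR inequality by $\|\bm{w}^{(j)}\|^2$, gives $f^{(j)}_{\text{o}\bm{w}}=\min_i|\bm{h}_i^H(\boldPhi^{(j-1)})\overline{\bm{w}}^{(j)}|^2\ge\gamma\sigma^2/P_t^{(j)}$.

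Next I would invoke the selection rule. By construction $\boldPhi^{(j)}$ is the candidate in $G=\{k:f^{(j)}_k\ge f^{(j)}_{\text{o}\bm{w}}\}$ attaining $f^{(j)}_{\text{o}\boldPhi}=\max_{k\in G}f^{(j)}_k$, so $f^{(j)}_{\text{o}\boldPhi}=\min_i|\bm{h}_i^H(\boldPhi^{(j)})\overline{\bm{w}}^{(j)}|^2\ge f^{(j)}_{\text{o}\bm{w}}\ge\gamma\sigma^2/P_t^{(j)}$. Now consider the $\bm{w}$-step of iteration $j+1$, i.e.\ Problem~(P2) with $\boldPhi=\boldPhi^{(j)}$. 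Keeping the previous direction and rescaling, $\bm{w}':=\sqrt{\gamma\sigma^2/f^{(j)}_{\text{o}\boldPhi}}\;\overline{\bm{w}}^{(j)}$ satisfies every SINR constraint there (apply the elementary fact to $\boldPhi^{(j)}$ and $\overline{\bm{w}}^{(j)}$) and has power $\|\bm{w}'\|^2=\gamma\sigma^2/f^{(j)}_{\text{o}\boldPhi}\le P_t^{(j)}$. Since the $\bm{w}$-step returns a feasible beamformer of power no larger than that of $\bm{w}'$, we obtain $P_t^{(j+1)}\le\gamma\sigma^2/f^{(j)}_{\text{o}\boldPhi}\le P_t^{(j)}$; as $j$ is arbitrary, $\{P_t^{(j)}\}_{j}$ is non-increasing, which is the assertion of the proposition.

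The step I expect to be the real obstacle is justifying the two ``$\le$'' that silently assume the subproblems are solved exactly: (P3) and (P6) are SDP relaxations followed by Gaussian randomization, so $\bm{w}^{(j)}$ and $\boldPhi^{(j)}$ need not be globally optimal, and in general a randomized-rounded solution can be worse than an arbitrary feasible point such as $\bm{w}'$. I would close this gap by spelling out two book-keeping provisos that the algorithm can enforce at no cost: (i) the beamformer carried over from the previous iterate, rescaled to be minimally feasible, is always retained among the randomization candidates of the $\bm{w}$-step, so its power $\gamma\sigma^2/f^{(j)}_{\text{o}\boldPhi}$ is indeed an upper bound on $P_t^{(j+1)}$; and (ii) $\boldPhi^{(j-1)}$ is always retained among the candidates of the $\boldPhi$-step, which makes $G$ non-empty (it already achieves $f=f^{(j)}_{\text{o}\bm{w}}$), so the selection rule is well defined and can only raise, never lower, the value of $f$ along $\overline{\bm{w}}^{(j)}$. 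With these provisos the chain of inequalities above is rigorous, and one checks separately that whenever Algorithm~\ref{Alg-Problem-P1} terminates no further value $P_t$ is produced, so monotonicity holds over the whole run.
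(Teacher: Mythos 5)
Your argument is essentially the paper's own: both reduce the per-iteration power to $\gamma\sigma^2/f$ with $f=\min_i|\bm{h}_i^H(\boldPhi)\overline{\bm{w}}|^2$, use the selection rule to get $f^{(j)}_{\text{o}\boldPhi}\ge f^{(j)}_{\text{o}\bm{w}}$, and conclude $P_t^{(j+1)}\le P_t^{(j)}$ from the subsequent $\bm{w}$-step. Your closing provisos merely make explicit what the paper assumes by conditioning on $\bm{w}^{(j+1)}$ being the optimal solution of (P2) — a slightly more careful treatment of the SDR-plus-randomization subproblems, but the same route.
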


\begin{proof}
Let $P_t=\|\bm{\bm{w}}\|^2$ denote  the transmit power.  Given $\boldPhi$, Problem~(P2)~can be rewritten as
\begin{equation}\label{min-power}
\begin{aligned}
&\min_{\bm{w}} P_t\\
&\mathrm{s.t.} \quad \frac{{P_t| \bm{h}_i^{H}(\boldPhi) \overline{\bm{w}} |}^2}{{\sigma}^2 }\geq \gamma \quad \forall i\in\{1,2,\ldots,K\}.
\end{aligned}
\end{equation}
 Apparently, the minimum value of $P_t$ is
\begin{equation}\label{proof-inc}
P_t= \frac{\gamma\sigma^2}    {\min (  {| \bm{h}_1^{H}(\boldPhi) \overline{\bm{w}} |^2},\ldots, {| \bm{h}_K^{H}(\boldPhi) \overline{\bm{w}} |^2} )}=\frac{\gamma\sigma^2}{f}.
\end{equation}
Note that, if the value of $f$  after optimizing $\bm{w}$ given $\boldPhi$ is \mbox{non-decreasing} over the iterations, then $P_t$ is \mbox{non-increasing} over the iterations; i.e., if $f^{(j+1)}_{\text{o}\bm{w}} \ge f^{(j)}_{\text{o}\bm{w}}$, then we have $P_t^{(j+1)} \le P_t^{(j)}$. Based on the rule of selecting one as the value of $\boldPhi$ during the $j^{th}$ iteration, it is easily to derive   $f^{{(j)}}_{\text{o}\boldPhi} \ge f^{(j)}_{\text{o}\bm{w}}$. Then, if the $\bm{w}^{
(j+1)}$ is the optimal solution to Problem (P2) during the $(j+1)^{th}$ iteration, we derive $f^{(j+1)}_{\text{o}\bm{w}} \ge f^{{(j)}}_{\text{o}\boldPhi}\ge f^{(j)}_{\text{o}\bm{w}}$. Hence, we have $P_t^{(j+1)} \le P^{(j)}_t$, which means that $P_t$ is \mbox{non-increasing} over the iterations.
\end{proof}

 \section{A lower bound for   minimum transmit power} \label{sec:analysis}
 In this section, for the IRS-aided broadcast pattern, we  derive a lower bound of the minimum transmit power.

 We assume  that  the BS-MUs and IRS-MUs channel are Rayleigh fading, and that BS-IRS channel is  LoS.
 We consider the uncorrelated Rayleigh fading channel  model for  IRS-$i^{th}$MU and BS-$i^{th}$MU; i.e.~$\bm{h}_{r,i}\sim \text{ }\mathcal{C}\mathcal{N}(0,{\beta_{r,i}^2}{\bm{I}})$,~$\bm{h}_{b,i}\sim \text{ }\mathcal{C}\mathcal{N}(0,{\beta_{b,i}^2}{\bm{I}})$, where $\beta_{r,i}^2$ and $\beta_{b,i}^2$ account for  the path loss of  IRS-MUs and BS-MUs respectively. Let ($x_{\text{BS}}$, $y_{\text{BS}}$, $z_{\text{BS}}$) and ($x_{\text{IRS}}$, $y_{\text{IRS}}$, $z_{\text{IRS}}$) be the  coordinate of BS  and  IRS respectively. Then, the channel  between the BS and IRS is given by~\cite{Los}
\begin{equation}
\bm{H}_{b,r}=\sqrt{\frac{{\beta_h}^2}{{2}}}\bm{s}\bm{g}^T,
\end{equation}
where
\begin{equation}
   \begin{array}{l}
\bm s = {[{s_1},\ldots,{s_m},\ldots,{s_M}]^{T}},\bm b = [{g_1},\ldots,{g_n},\ldots,{g_N}]^T\\
{s_m} \hspace{-1.5pt}= \hspace{-1.5pt}\exp \left( {j\frac{{2\pi }}{\lambda }{d_{{\rm{BS}}}}(m - 1)\text{sin}{\phi _{Lo{S_1}}}\text{sin}{\theta _{Lo{S_1}}}} \right)\hspace{-1.5pt},{\rm{  }}m \hspace{-1.5pt}=\hspace{-1.5pt} 1,\hspace{-1.5pt}\ldots,\hspace{-1.5pt}M\\
{g_n} \hspace{-1.5pt}=\hspace{-1.5pt} \exp \left( {j\frac{{2\pi }}{\lambda }{d_{{\rm{IRS}}}}(n - 1)\text{sin}{\phi _{Lo{S_2}}}\text{sin}{\theta _{Lo{S_2}}}} \right)\hspace{-1.5pt},{\rm{  }}n \hspace{-1.5pt}=\hspace{-1.5pt} 1,\hspace{-1.5pt}\ldots,\hspace{-1.5pt}N\\
{\theta _{Lo{S_1}}} = {\tan ^{ - 1}}\left( {\dfrac{{{d_{{\rm{BS - IRS}}}}}}{{{z_{{\rm{IRS}}}} - {z_{{\rm{BS}}}}}}} \right),{\rm{   }}{\theta _{Lo{S_2}}} = \pi  - {\theta _{Lo{S_1}}}\\
{\phi _{Lo{S_1}}} = \pi  - {\tan ^{ - 1}}\left( {\dfrac{{{y_{{\rm{IRS}}}} - {y_{{\rm{BS}}}}}}{{{x_{{\rm{IRS}}}} - {x_{{\rm{BS}}}}}}} \right),{\rm{   }}{\phi _{Lo{S_2}}} = \pi  + {\phi _{Lo{S_1}}},
\end{array}
\end{equation}
 where $\lambda $ is wavelength, $d_\text{BS}$ and $d_\text{IRS}$ are the inter-antenna separation at the BS and IRS respectively, ${\phi _{Lo{S_1}}}$ and ${\phi _{Lo{S_2}}}$ are the LoS azimuth at BS and IRS respectively,
  ${\theta _{Lo{S_1}}}$ and ${\theta _{Lo{S_2}}}$ denote the  elevation angle of departure at BS and elevation angle of arrival at IRS respectively,  $\beta_{h}^2$ accounts for  the path loss of  IRS-BS, and $d_{\text{BS-IRS}}$ represent the distance between  the BS and IRS.

Next, we present the details of deriving the lower bound of   transmit power $P_t$ with respect to  the number of IRS units $N$,  the  number of MUs $K$, and  the number of antennas $M$, considering the following  two cases of parameter settings: \mbox{1) $K=1$ and $M>1$;} \mbox{2) $K>1 $ and $M>1$.} In addition, when   discussing   the case  of $K=1$, we omit  the subscript $i$ of  $\beta_{b,i}$ and $\beta_{r,i}$ for  presentation simplicity.
%


%
\textbf{Case 1): $K=1$ and $M>1$.} Based on Eq.~(\ref{min-power}),  given $\boldPhi$, the minimum transmit power $P_t$ is $P_t=\dfrac{{\sigma}^2\gamma}{| \bm{h}_1^{H}(\boldPhi)  \overline{\bm{w}} |^2}$. Furthermore, because $\bm{h}_1^{H}(\boldPhi)$ is random variance, transmit power $P_t$ should be considered to be  the average transmit power, which is more accurately written as
\begin{equation}\label{L-hp}
P_t=\frac{{\sigma}^2\gamma}{\mathbb{E}(| \bm{h}_1^{H}(\boldPhi)  \overline{\bm{w}} |^2)}.
\end{equation}
This means that minimizing the transmit power is  equivalent to  maximizing the term
$\mathbb{E}({| \bm{h}_1^{H}(\boldPhi)  \overline{\bm{w}} |^2})$.

For  $|\bm{h_1}^\text{H}(\boldPhi) \overline{\bm{w}}|$, we have
\begin{equation}\label{L-h3}
\begin{aligned}
|  \bm{h}_1^{H}(\boldPhi) \overline{\bm{w}}  | &=| \bm{h}_{r,1}^{{H}}{\boldPhi}{\bm{H}_{b,r}} \overline{\bm{w}}+  \bm{h}_{b,1}^{{H}} \overline{\bm{w}}| \\
&\mathop  \le \limits^{(a)}  |\bm{h}_{r,1}^{{H}}{\boldPhi}{\bm{H}_{b,r}} \overline{\bm{w}}|+|\bm{h}_{b,1}^{{H}} \overline{\bm{w}}|.
\end{aligned}
\end{equation}
Based on the triangle inequality, Eq.~{\ref{L-h3}}(a) holds if and only if
$\arg(\bm{h}_{r,1}^{{H}}{\boldPhi}{\bm{H}_{b,r}} \overline{\bm{w}})=\arg(\bm{h}_{b,1}^{{H}} \overline{\bm{w}})=\varphi_0$.

Let $ {A}=|\bm{h_{r,1}}^{{H}}{\boldPhi}{\bm{H}_{b,r}} \overline{\bm{w}}|, {B}=|h_{b,1}^{{H}} \overline{\bm{w}}|$. Then,  the maximum value of  $ \mathbb{E}(|\bm{h}_1^{H}(\boldPhi)  \overline{\bm{w}}|^2)$ with respect to $\boldPhi$ and $\overline{\bm{w}}$, denoted by $Q_1$, is given by
\begin{equation}\label{L-h4}
\begin{aligned}
Q_1&=\text{max}(\mathbb{E}(|{\bm{h}_1}^{H}(\boldPhi)  \overline{\bm{w}} |^2))
=\mathbb{E}(({A}+{B})^2)\\
&=\mathbb{E}({A}^2)+2E({AB})+E({B}^2).
\end{aligned}
\end{equation}
Next, we discuss  how to derive  each term in Eq.~(\ref{L-h4}).

For $\mathbb{E}({A}^2)$, we have
\begin{equation}\label{L-2h1}
\hspace{-10pt}\left\{\begin{aligned}
\mathbb{E}(A)&\mathop =\limits^{(a)}\mathbb{E}( {\sum\nolimits_{n=1}^{N}}{|h_{r,1,n}^{H}}|| {{\sum\nolimits_{m=1}^{M}}{{H}_{b,r,m,n}}  \overline{w_m}| }  )\\
&=\mathbb{E}( {\sum\nolimits_{n=1}^{N}}{|h_{r,1,n}^{H}}| |C_n|  )\\
&\mathop =\limits^{(b)}\mathbb{E}({|h_{r,1,1}^{H}}|)(|C_1| +|C_2|+\ldots+|C_N|), \\
\mathbb{E}^2(A)  &=(|C_1| +|C_2|+\ldots+|C_N|)^2 \mathbb{E}^2({|h_{r,1,1}^{H}}|)\\
&\mathop  = \limits^{(c)} (N|C_1|)^2\frac{\beta_r^2 \pi}{4}
\mathop  \le \limits^{(d)} \frac{\pi N^2\beta_h^2 \beta_r^2M}{8},\\
\hspace{-10pt}\textup{Var}(A) &\hspace{-1pt}=\hspace{-1pt} \textup{Var}({\sum\nolimits_{n=1}^{N}}|h_{r,1,n}^{H}| |C_n|)  \mathop   \le  \limits^{(e)} \frac{\beta_r^2}{2}(2\hspace{-1pt}-\hspace{-1pt}\frac{\pi}{2})\hspace{-1pt}\times\hspace{-1pt} \frac{\beta_h^2M}{2} \hspace{-1pt}\times\hspace{-1pt} N,\\
\hspace{-10pt}\mathbb{E}(A^2)\hspace{-1pt}&=\resizebox{0.83\hsize}{!}{$\hspace{-1pt}\mathbb{E}^2(\bm{A})\hspace{-1pt}+\hspace{-1pt}\textup{Var}(A)\hspace{-1pt}\le\hspace{-1pt}\frac{\pi N^2\beta_h^2 \beta_r^2M}{8}\hspace{-1pt}+\hspace{-1pt}\frac{N\beta_r^2\beta_h^2M}{4}(2\hspace{-1pt}-\hspace{-1pt}\frac{\pi}{2})$},
\end{aligned} \right.
\end{equation}
where $ C_n={{\sum_{m=1}^{M}}{\bm{H}_{b,r,m,n}}  \overline{w_m}}$, step (a)~follows from the fact that $\arg(\bm{h_{r,1}}^{{H}}{\boldPhi}{\bm{H}_{b,r}} \overline{\bm{w}})=\varphi_0$, step~(b)~follows from the fact that $\bm{h_{r,1}}\sim \text{ }\mathcal{C}\mathcal{N}(0,{\beta_{r,1}^2}{\bm{I}})$. For step~(c), Since the element in ${\bm{H}_{b,r,m,n}}$ has the same amplitude and $[\overline{w_1},\ldots,\overline{w_M}]$ is the normalized vector,
 it is easy to derive that $|C_1|=\ldots=|C_N|$.
 ~Step~(c)~ also follows from the fact that $| {h_{r,1,1}}^{H}|$  has distribution of Rayleigh with mean  $\frac{ \beta_r {\sqrt{\pi }}  }{2}$, and steps (d)(e) follow from  the fact  that term $|C_1|^2\le \frac{{M}\beta_h^2}{2} $ and that $| {h_{r,1,1}}^{H}|$  has distribution of Rayleigh with variance $\frac{\beta_r^2}{2} (2-\frac{\pi}{2})$.

For $\mathbb{E}({B}^2)$, we have
\begin{equation}\label{L-2h2}
\left\{\begin{aligned}
\mathbb{E}({B})&\mathop =\limits^{(a)}\resizebox{0.8\hsize}{!}{$\mathbb{E}( \sum\nolimits_{m=1}^{M}| {h_{b,m}}^{H} \overline{w_m}|)=\mathbb{E}( \sum\nolimits_{m=1}^{M}| {h_{b,m}}^{H}   | | \overline{w_m}|)$}\\
  &=( | \overline{w_1}|+| \overline{w_2}|+\ldots+| \overline{w_M}|)\mathbb{E}(|h_{b,1}^{{H}}|),\\
\mathbb{E}^2(B)  &=(| \overline{w_1}|+| \overline{w_2}|+\ldots+| \overline{w_M}|)^2 \mathbb{E}^2(|h_{b,1}^{{H}}|)\\
&\mathop  \le \limits^{(b)} (M|\overline{w_1}|)^2\frac{\beta_b^2 \pi}{4}=\frac{\pi \beta_b^2 M}{4},\\
\textup{Var}(B)&=\textup{Var}(\sum\nolimits_{m=1}^{M}| {h_{b,m}}^{H} || \overline{w_m} |)\mathop  \le \limits^{(c)}\frac{\beta_b^2}{2}(2-\frac{\pi}{2}),\\
\mathbb{E}(B^2)&= \mathbb{E}^2(B)+\textup{Var}(B)
\leq \frac{\pi \beta_b^2 M}{4}+ \frac{\beta_b^2}{2}(2-\frac{\pi}{2}),
\end{aligned} \right.
\end{equation}
where step (a)~follows from the fact that $\arg(\bm{h}_{b,1}^{{H}} \overline{\bm{w}})=\varphi_0$, steps (b) and (c)~follow from the fact that $| {h_{b,1}}^{H}|$  has distribution of Rayleigh with mean  $\frac{ \beta_b {\sqrt{\pi }}  }{2}$ and variance $\frac{\beta_b^2}{2} (2-\frac{\pi}{2})$, step (b) also follows  from  the fact  that term $(| \overline{w_1}|+| \overline{w_2}|+\ldots+| \overline{w_M}|)^2$ takes  the maximum value if $| \overline{w_1}|=| \overline{w_2}|=\ldots=| \overline{w_M}|$ because of $\sum\nolimits_{m=1}^{M}| \overline{w_m}|
^2=1$.

For $\mathbb{E}(AB)$,  we have
\begin{equation}\label{L-2h3}
\begin{aligned}
&\mathbb{E}(AB)=\sqrt{\mathbb{E}^2(A)\mathbb{E}^2(B)}=\frac{N\pi \beta_r \beta_h \beta_b  M}{4\sqrt{2}}. \\
\end{aligned}
\end{equation}

Substituting Eq.~(\ref{L-2h1})-Eq.~(\ref{L-2h3}) into  Eq.~(\ref{L-h4}), we   have
\begin{equation}\label{L-2h5}
\begin{aligned}
   Q_1  &=\mathbb{E}({A}^2)+2E({AB})+E({B}^2)\\
   &=\frac{\pi N^2\beta_b^2 \beta_r^2M}{8}+\frac{N\beta_r^2\beta_h^2M}{4}(2-\frac{\pi}{2}) +\frac{N\pi \beta_r \beta_h \beta_b   M}{2\sqrt{2}} \\ &+\frac{\beta_b^2}{2}(2-\frac{\pi}{2})+\frac{\pi \beta_b^2 M}{4}.
\end{aligned}
\end{equation}

Then,  substituting Eq.~(\ref{L-2h5}) into    Eq.~(\ref{L-hp}), the lower bound of the minimum transmit power at the BS in the case of $K=1, M>1$ is obtained in Eq.~(\ref{L-011}).
\begin{figure*}
\begin{equation}\label{L-011}
\resizebox{1\hsize}{!}{
$P_t \ge P_{K=1,M>1}^L=\dfrac{{\sigma}^2\gamma}{\text{max}(\mathbb{E}(| {h_1}^{H}(\boldPhi) \overline{\bm{\omega}} |^2))}=\dfrac{{\sigma}^2\gamma}{Q_1}
=\dfrac{{\sigma}^2\gamma}{\dfrac{\pi N^2\beta_b^2 \beta_r^2M}{8}+\dfrac{N\beta_r^2\beta_h^2M}{4}(2-\dfrac{\pi}{2}) +\dfrac{N\pi \beta_r \beta_h \beta_b   M}{2\sqrt{2}}+\dfrac{\beta_b^2}{2}(2-\dfrac{\pi}{2})+\dfrac{\pi \beta_b^2 M}{4}}.$}
\end{equation}
\end{figure*}


%

\textbf{Case 2): $K>1 $ and $M>1$.} The minimum value of $P_t$  satisfying the constrains in  Eq.~(\ref{min-power}), is
\begin{equation}\label{L-3h2}
P_t= \frac{\gamma\sigma^2}{\min ( {\mathbb{E}( {| \bm{h_1}^{H}(\boldPhi)  \overline{\bm{w}} |^2})},\ldots,  {\mathbb{E}({| \bm{h_K}^{H}(\boldPhi)  \overline{\bm{w}} |^2} )})}.
\end{equation}

Based on ${\min ( {\mathbb{E}( {| \bm{h_1}^{H}(\boldPhi)  \overline{\bm{w}} |^2})},\ldots,  {\mathbb{E}({| \bm{h_K}^{H}(\boldPhi)  \overline{\bm{w}} |^2} )})} \le {\min({Q_1,Q_2,\ldots,Q_K})}$, the lower bound of the minimum transmit power at the BS is given by
\begin{equation}\label{L-002}
P_t \ge P_{K>1,M>1}^L= \frac{r\sigma^2}{\min({Q_1,Q_2,\ldots,Q_K})},
\end{equation}
where  $Q_i=\max({\mathbb{E}(|\bm{h_i}^{H}(\boldPhi) \overline{\bm{w}} |^2}))$. Eq.~(\ref{L-2h5}) only presents how to get the value of  $Q_1$, and we can use the same way to compute the other values of $Q_i$ ($i=1,\ldots,K$).

In addition, since Problem P(3) has $K$ linear constrains and $M^2$ variables, the complexity of solving  Problem P(3) is  ${O}( (K+M^2)^{3.5})$ for one iteration ~\cite{complexity}. Similarly,  the complexity of solving  Problem P(6) is  ${O}( 2K+(N+1)^2)^{3.5})$ for one iteration. Hence, the complexity of the proposed alternating  optimization is $O( (K+M^2)^{3.5})$+${O}( 2K+(N+1)^2)^{3.5})$ for one iteration. A future direction for us is to reduce the computational complexity and one potential idea is to use manifold optimization ~\cite{pan2019intelligent}.

\section{Simulation results} \label{sec:simulation}

In  this section, we utilize numerical results to validate the derived lower bound of the transmit power and the  alternating optimization algorithm. We assume that the BS with  uniform linear array of antennas is located at (0,0,0),  and that the IRS with  uniform linear array of IRS units is located at (0,50,0). The inter-antenna and inter-unit separation at BS and IRS are  half wavelength. The purpose of deploying IRS is to improve the signal strength.  To illustrate this benefiting, we assume that the MUs are uniformly located at the half circle centered at the IRS with radius 2 m as shown in Fig.~\ref{simulation-lacation}, which are the cell-edge MUs. The channel models for BS-IRS, BS-MUs and IRS-MUs are the same as we described in Section \ref{sec:analysis}, and the path loss is $\beta^2_{a,b}=C_0(d_{a,b}/D_0)^{-\alpha}$, where~$C_0=1$~m, $D_0=-30$dB, $d_{a,b}$ denotes the distance between $a$ and $b$, $\alpha$ is the path loss exponent. We set $\sigma^2=-30$ dBm, $\gamma=1 dB$,  and $\varepsilon=10^{-4}$. For BS-IRS, IRS-MUs, and BS-MUs, we set $\alpha=2,2.8,3.5$ respectively. In addition, we employ the conventional power control (i.e., without IRS, termed Without-IRS in the result figures) and  power control with random phrase shift at the IRS  (termed
Random-IRS in the result figures) as our baselines.\vspace{-10pt}

\begin{figure}[h]
\setlength{\belowcaptionskip}{-0.4cm}
\setlength{\abovecaptionskip}{-0.1cm}
  \centering
 \includegraphics[scale=0.6]{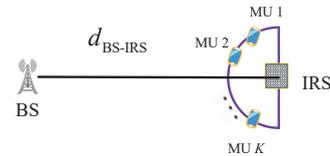}
  \caption{ The location of IRS, BS and MUs in the simulation.}\label{simulation-lacation}
\end{figure}

\begin{figure}[h]
\setlength{\belowcaptionskip}{-0.4cm}
\setlength{\abovecaptionskip}{-0.1cm}
  \centering
 \includegraphics[scale=0.33]{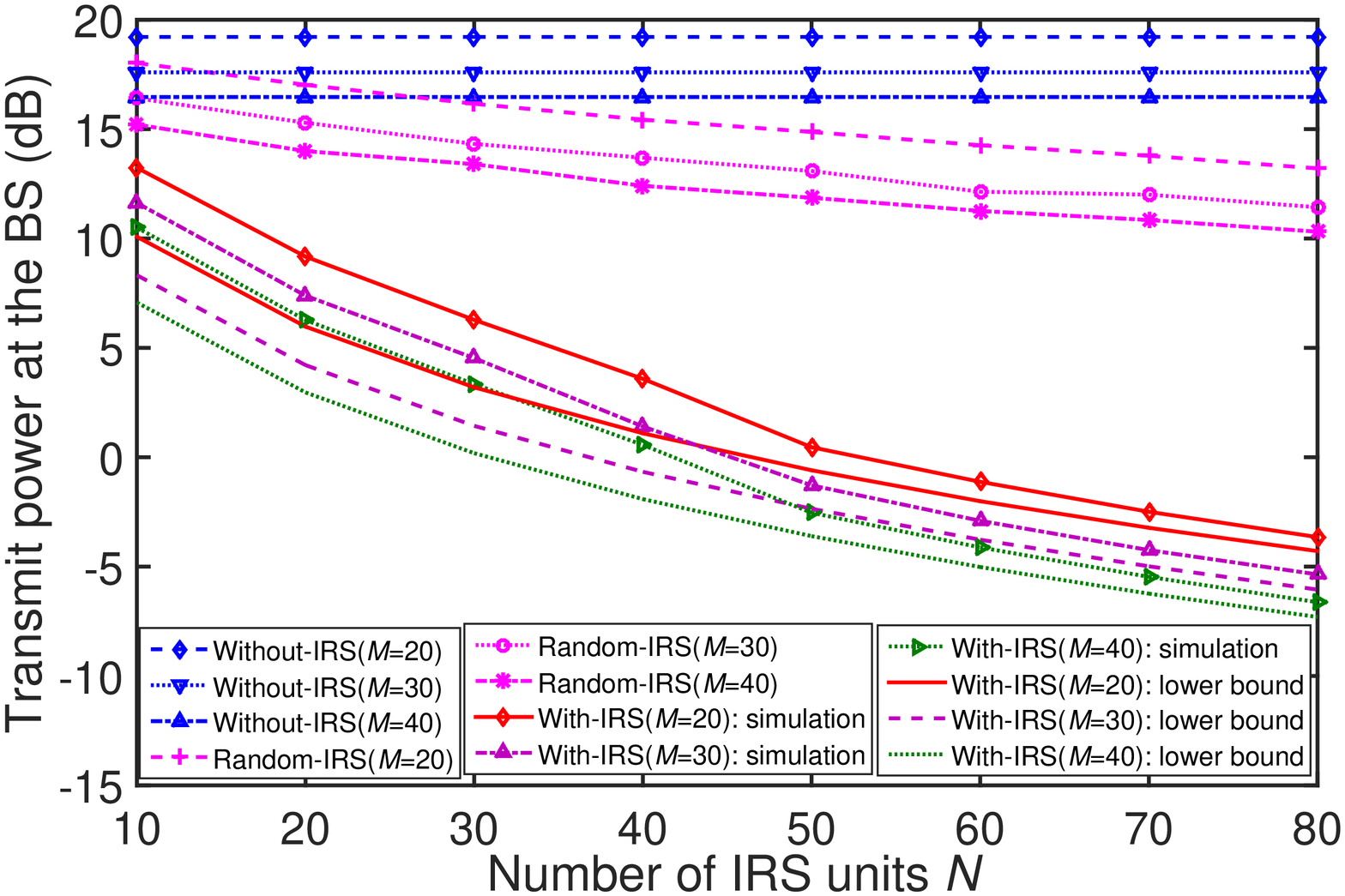}\\
  \caption{ Transmit power {\it{vs.}}  the number of IRS units ($K=1$).
  }\label{transmit-power-vs-N-K1}
\end{figure}

\begin{figure}[h]
\setlength{\belowcaptionskip}{-0.4cm}
\setlength{\abovecaptionskip}{-0.1cm}
  \centering
 \includegraphics[scale=0.33]{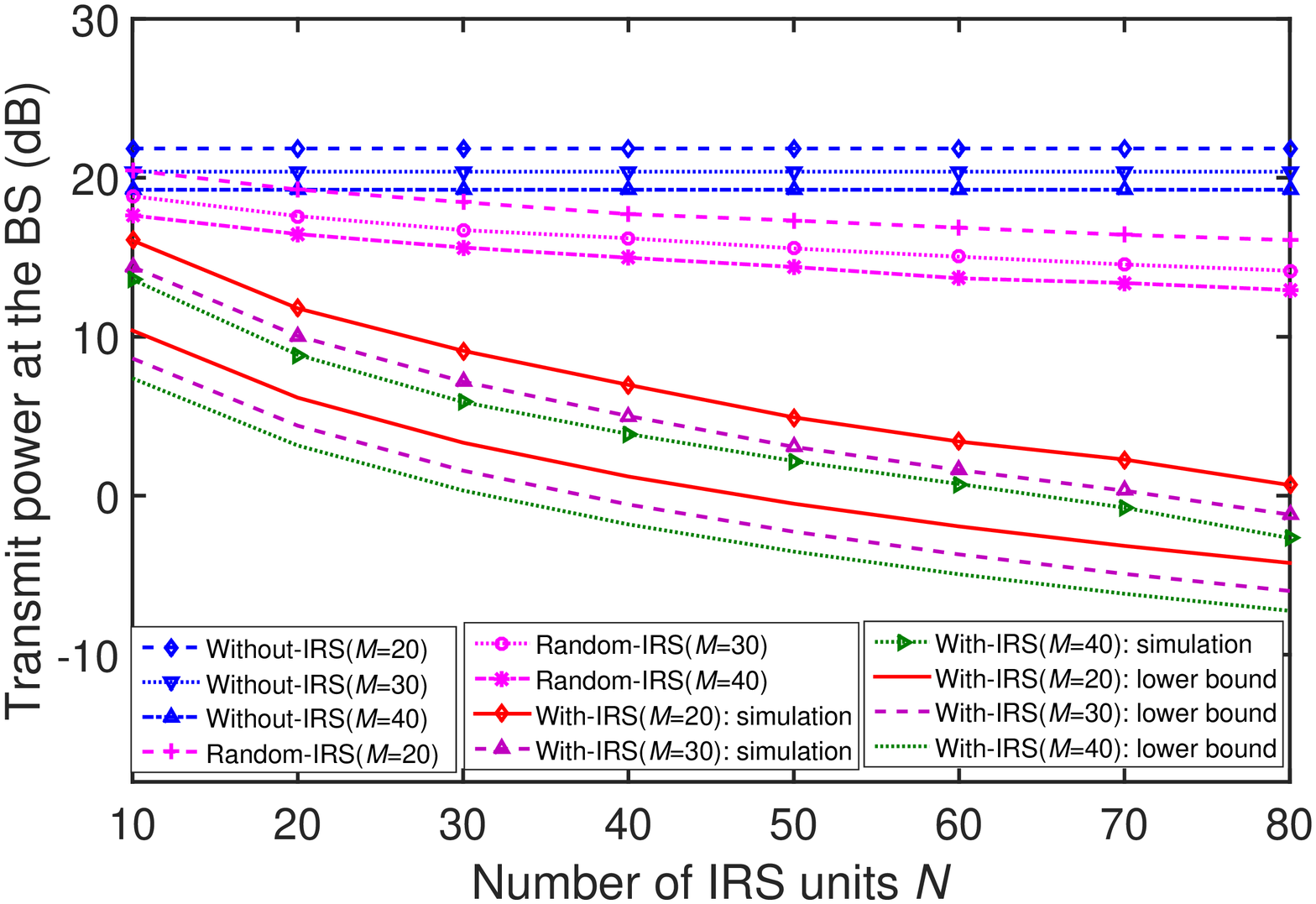}\\
  \caption{Transmit power  {\it{vs.}} the number of IRS units ($K=2$).
  }\label{transmit-power-vs-N-K2}
\end{figure}


\begin{figure}[h]
\setlength{\belowcaptionskip}{-0.4cm}
\setlength{\abovecaptionskip}{-0.1cm}
  \centering
 \includegraphics[scale=0.28]{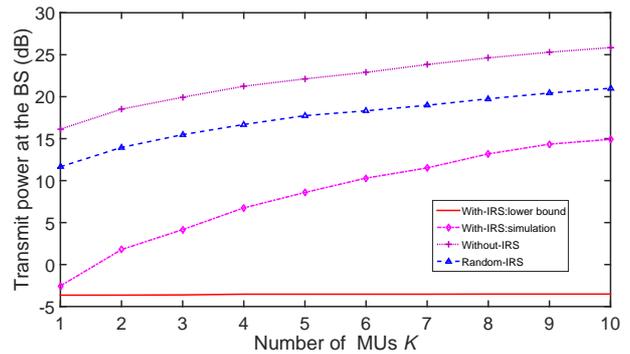}\\
  \caption{Transmit power {\it{vs.}}  the number of MUs.}\label{TK}
\end{figure}

Figs.~\ref{transmit-power-vs-N-K1}
and \ref{transmit-power-vs-N-K2}
show the variance of transmit power at BS with the number of IRS units for $M=20,30,40$ and $K=2$ respectively. We can see  that, the transmit power decreases with the increase of the number of IRS units and the number of antennas at the BS, which significantly lower than the baselines. This indicates that, deploying IRS can actually improve the signal strength and thus decrease the transmit power at the BS. Furthermore, the results from Figs.~\ref{transmit-power-vs-N-K1} and \ref{transmit-power-vs-N-K2} also show that the transmit power at the BS approaches   the lower bound  with the increase of the the number of IRS units, which coincide with the analysis results. Notice that  Fig.~\ref{transmit-power-vs-N-K1}
 is more obvious than Fig.~\ref{transmit-power-vs-N-K2}, and actually the speed of approaching   the lower bound in Fig.~\ref{transmit-power-vs-N-K2} is extremely slow.


Fig.\ref{TK} show the variance of transmit power at BS with the number of MUs $K$ ranging from 1 to 10   for $M=40, N=50$. The results show that,  with the increase of the number of MUs, the transmit power increases,  dramatically  lower than the baselines, and the gap between the transmit power and the lower bound widens up.  To obtaining a better bound which grows with $K$ is our future direction.


\section{Conclusion} \label{sec:conclution}
In this paper, we have proposed a solution to the power control under QoS for an  IRS-aided  wireless network. Specifically, we utilize the alternating optimization algorithm   to   jointly optimize  the transmit beamforming  at  the BS and the  passive IRS units at  the IRS. Furthermore, we derived  a lower bound of the minimum transmit power for the IRS-enhanced  physical layer broadcasting.
Simulation results  show that,   the transmit power at the BS approaches   the lower bound with  the number of IRS units, and   is significantly lower  than that of the communication system without IRS.


\bibliographystyle{IEEEtran}
\bibliography{ref}

\end{document}